\definecolor{dblue}{rgb}{0,0.1,.6}
\newcommand{\Footnote}[1]{\footnote{\unexpanded{#1}}}
\newcommand{\id}{\mathbbm{1}}
\newcommand{\bra}{\langle}
\newcommand{\ket}{\rangle}
\newcommand{\Tr}{\operatorname{Tr}}
\newcommand{\mc}[1]{\mathcal{#1}}
\renewcommand{\vec}[1]{{\boldsymbol{#1}}}
\newcommand{\pdag}{{\phantom{\dag}}}
\newcommand{\ud}{\mathrm{d}}
\newcommand{\ui}{\mathrm{i}\mkern1mu}
\newcommand{\ue}{\mathrm{e}}
\newcommand{\M}{\mc{M}}
\newcommand{\RR}{\mathbb{R}}
\newcommand{\CC}{\mathbb{C}}
\newcommand{\veps}{\varepsilon}
\newcommand{\hA}{\hat{A}}
\newcommand{\hB}{\hat{B}}
\newcommand{\hU}{\hat{U}}
\newcommand{\hV}{\hat{V}}
\newcommand{\hW}{\hat{W}}
\newcommand{\hH}{\hat{H}}
\newcommand{\hO}{\hat{O}}
\newcommand{\hX}{\hat{X}}
\newcommand{\hY}{\hat{Y}}
\newcommand{\hd}{\hat{d}}
\newcommand{\hg}{\hat{g}}
\newcommand{\heta}{\hat{\eta}}
\newcommand{\hZ}{\hat{Z}}
\newcommand{\tU}{\tilde{U}}
\newcommand{\hUC}{\hat{\mc{U}}}
\newcommand{\hsigma}{\hat{\sigma}}
\newcommand{\dm}{{\hat{\rho}}}
\newcommand{\End}{\operatorname{End}}
\newcommand{\avg}{\operatorname{Avg}}
\newcommand{\var}{\operatorname{Var}}
\newcommand{\cov}{\operatorname{Cov}}
\newcommand{\Avg}{\avg_{\hU}}
\newcommand{\Var}{\var_{\hU}}
\newcommand{\Swap}{\operatorname{Swap}}
\newcommand{\groupU}{\operatorname{U}}
\newcommand{\full}{\text{full}}
\newtheorem{theorem}{Theorem}
\newcommand{\duke} {Department of Physics, Duke University, Durham, North Carolina 27708, USA}
\newcommand{\dqc}  {Duke Quantum Center, Duke University, Durham, North Carolina 27701, USA}
\newcommand{\tz}    {Tensor Center, Auf dem Dresch 15, 52152 Simmerath, Germany}
\begin{document}

\title{Equivalence of cost concentration and gradient vanishing for quantum circuits: An elementary proof in the Riemannian formulation}
\author{Qiang Miao}
\affiliation{\dqc}
\author{Thomas Barthel}
\email{thomas.barthel@duke.edu}
\affiliation{\dqc}
\affiliation{\duke}
\affiliation{\tz}

\begin{abstract}
The optimization of quantum circuits can be hampered by a decay of average gradient amplitudes with increasing system size. When the decay is exponential, this is called the barren plateau problem. Considering explicit circuit parametrizations (in terms of rotation angles), it has been shown in \emph{Arrasmith et al., Quantum Sci.\ Technol.\ 7, 045015 (2022)} that barren plateaus are equivalent to an exponential decay of the variance of cost-function differences. We show that the issue is particularly simple in the (parametrization-free) Riemannian formulation of such optimization problems and obtain a tighter bound for the cost-function variance. An elementary derivation shows that the single-gate variance of the cost function is \emph{strictly equal} to half the variance of the Riemannian single-gate gradient, where we sample variable gates according to the uniform Haar measure. The total variances of the cost function and its gradient are then both bounded from above by the sum of single-gate variances and, conversely, bound single-gate variances from above. So, decays of gradients and cost-function variations go hand in hand, and barren plateau problems cannot be resolved by avoiding gradient-based in favor of gradient-free optimization methods.
\end{abstract}

\date{March 11, 2024}

\maketitle

\section{Introduction}
Recent rapid advancements in quantum computing hardware enable the implementation of large and deep quantum circuits, reaching regimes beyond the simulation capabilities of classical computers. A promising scheme to harness this potential before the advent of practical fault-tolerance are variational quantum algorithms (VQA) \cite{Cerezo2021-3}: Quantum circuits are executed on quantum computers and the quantum-gate parameters  are optimized through a classical backend to minimize a given cost function. A critical challenge in such hybrid quantum-classical optimizations consists in noise and the probabilistic nature of quantum measurements. In generic variational quantum circuits, average gradient amplitudes tend to decrease \emph{exponentially} in the system size (number of qudits). This phenomenon is known as \emph{barren plateaus} \cite{McClean2018-9, Cerezo2021-12}. Unless one already has a very good guess for the optimal circuit, the barren plateau problem implies that we would need an exponential number of measurement shots for a sufficiently accurate determination of cost-function gradients, prohibiting the application for large problem sizes. Otherwise, we would very likely end up with random walks in flat regions of the cost landscape. Numerous works investigate how to avoid an exponential decay of gradient amplitudes~\cite{Grant2019-3,Zhang2022_03,Mele2022_06,Kulshrestha2022_04,Dborin2022-7,Skolik2021-3,Slattery2022-4,Haug2021_04,Sack2022-3,Rad2022_03,Tao2022_05,Wang2023_02,Miao2024-109,Barthel2023_03,Zhang2024-132}, and the absence of barren plateaus might imply classical simulability~\cite{Cerezo2023_12}.
\begin{figure*}[t]
	\includegraphics[width=0.9\textwidth]{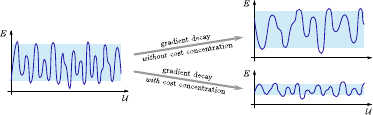}
	\caption{\label{fig:gradConcentration} An increase in dimensionality (scaling up the system size $n$) can lead to a decay of gradients. The situation where the average gradient amplitude decays exponentially in $n$ is the so-called barren-plateau problem. In general, gradient decay may or may not be accompanied by concentration of the cost function. As discussed here and in Ref.~\cite{Arrasmith2022-7} the two phenomena go hand in hand for VQA.}
\end{figure*}

The quantum circuits in VQA can comprise fixed unitary gates $\{\hW_1,\hW_2,\dotsc\}$ and variable unitary gates $\{\hU_1,\hU_2,\dotsc,\hU_K\}$. For example, the former could be CNOT gates and the latter single-qubit gates. The variable gates are typically parametrized by rotation angles, $\hU_i=\hU_i(\vec{\theta}_i)\in\groupU(N_i)$, and the optimization is based on the Euclidean metric in the angle space $\{\vec{\theta}_i\}$.
Using this framework and assuming that the variables gates are composed of rotations $\ue^{-\ui\theta_{i,k}\hsigma_k}$ with involutory generators $\hsigma_k=\hsigma_k^\dag=\hsigma_k^{-1}$, Arrasmith \emph{et al.} established the equivalence of barren plateaus and 
an exponential decay of the variance of cost-function differences with respect to increasing system size \cite{Arrasmith2022-7}.

Alternatively, one can formulate the circuit optimization problem directly over the manifold
\begin{equation}\label{eq:Mfull}
	\mc{M}=\groupU(N_1)\times\groupU(N_2)\times\dots\times\groupU(N_K)
\end{equation}
formed by the direct product of the gates' unitary groups in a representation-free form. In this Riemannian approach \cite{Smith1994-3,Huang2015-25}, gradients are elements of the tangent space of $\M$, and one can implement line searches and Riemannian quasi-Newton methods through retractions and vector transport on $\M$ as discussed in recent works~\cite{Miao2021_08,Wiersema2023-107}. Riemannian optimization has some advantages over the Euclidean optimization of parametrized quantum circuits. For example, it avoids cost-function saddle points that are introduced when employing a global parametrization $\{\vec{\theta}_i\}$ of the manifold $\M$ (consider, e.g., sitting at the north pole of a sphere and rotating around the $z$ axis). Furthermore, the Riemannian formulation can simplify analytical considerations, e.g., concerning average gradient amplitudes \cite{Miao2024-109,Barthel2023_03} and cost-function variances as discussed in the following.
\begin{figure*}[t]
	\includegraphics[width=\textwidth]{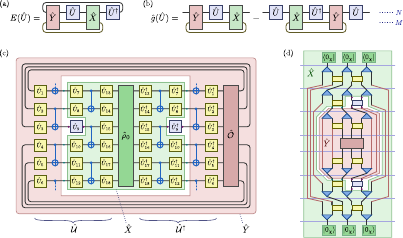}
	\caption{\label{fig:CostAndGrad} Diagrammatic representations for (a) cost function $E(\hU)$ and (b) Riemannian gradient $\hg(\hU)$, where we consider the dependence on a single gate $\hU\in\groupU(N)$ from the variable gates $\{\hU_i\in\groupU(N_i)\}$ that compose the quantum circuit $\hUC$; cf.\ Eqs.~\eqref{eq:costXY} and \eqref{eq:Riem_grad}. Cost functions of this form arise in various applications like quantum machine learning and variational quantum algorithms for the investigation of many-body ground states. Panel (c) shows how a cost function \eqref{eq:cost} for a quantum circuit consisting of single-qubit and CNOT gates attains the form \eqref{eq:costXY}. Panel (d) shows an example for the variational optimization of a multiscale entanglement renormalization ansatz (MERA) \cite{Vidal-2005-12,Vidal2006} -- a hierarchical tensor network state that features unitary disentanglers (yellow squares) and isometries (blue triangles). See Refs.~\cite{Miao2021_08,Miao2024-109,Barthel2023_03} for more context.}
\end{figure*}

In this report, we establish a direct connection between cost-function concentration and the decay of Riemannian gradient amplitudes in the optimization of quantum circuits. The proof in the Riemannian formulation is surprisingly simple and, compared to Ref.~\cite{Arrasmith2022-7}, yields tighter bounds. We will show that when the gates are sampled according to the uniform Haar measure, the single-gate cost-function variance is exactly half the single-gate variance of the Riemannian gradient. The corresponding total variances, where all gates are varied simultaneously, are both bounded from above by sums of the single-gate variances. Furthermore, the total variances bound all individual single-gate variances. As a consequence, the barren plateau problem can be equivalently diagnosed through the analysis of cost function concentration and cannot be resolved by switching from gradient-based optimization to a gradient-free optimization \cite{Arrasmith2022-7,Arrasmith2021-5}.

\section{Cost function and Riemannian gradient}
Consider a generic quantum circuit $\hUC$ composed of some fixed unitary gates $\{\hW_1,\hW_2,\dotsc\}$ and variable unitary gates $\{\hU_1,\hU_2,\dotsc\}$ over which we optimize. Starting from a reference state $\dm_0$, the circuit prepares the state $\dm=\hUC\dm_0\,\hUC^\dag$. With an observable $\hO$, the cost function takes the form
\begin{equation}\label{eq:cost}
	E(\{\hU_i\})=\Tr\left( \hUC\dm_0\,\hUC^\dag \hO \right)
\end{equation}
With $\dm_0=\sum_{s=1}^S\dm_s\otimes|s\ket\bra s|$ and $\hO=\sum_{s=1}^S \hO_s\otimes |s\ket\bra s|$, this setup also covers the more general case
$E(\{\hU_i\})=\sum_{s=1}^S\Tr\left( \hUC\dm_s\,\hUC^\dag \hO_s \right)$ with a training set $\{(\dm_s,\hO_s)\}$ of $S$ initial states $\dm_s$ and associated measurement operators $\hO_s$ \cite{Arrasmith2022-7}.

Considering the dependence on one of the variable gates, $\hU\in\groupU(N)$, we can write the cost function in the compact form
\begin{equation}\label{eq:costXY}
	E(\hU) = \Tr(\hY\tU\hX\tU^\dag)\quad\text{with}\quad \tU := \hU\otimes \id_M
	\quad\text{and}\quad
	\hX,\hY\in\End(\CC^{N}\otimes\CC^{M})
\end{equation}
as illustrated in Fig.~\ref{fig:CostAndGrad}a, where the Hermitian operator $\hX$ on $\CC^{N}\otimes\CC^{M}$ comprises $\dm_0$, $\hY$ comprises $\hO$, and both comprise further circuit gates except $\hU$. See Fig.~\ref{fig:CostAndGrad}c for an example. As discussed in Refs.~\cite{Miao2024-109,Barthel2023_03}, expectation values $\bra\Psi|\hH|\Psi\ket$ of a Hamiltonian $\hH$ with respect to isometric tensor network states (TNS) $|\Psi\ket=\hUC|0\ket$ can also be written in the form \eqref{eq:costXY}. In this case the TNS are generated from a pure reference state $|0\ket$ by application of a quantum circuit,
and $\hU$ corresponds to one tensor of the TNS. The example of a multiscale entanglement renormalization ansatz (MERA) \cite{Vidal-2005-12,Vidal2006} is illustrated in Fig.~\ref{fig:CostAndGrad}d.

Here and in Sec.~\ref{sec:VarLocal}, we consider variation of one specific unitary gate $\hU$ such that the Riemannian manifold is just $\groupU(N)$; this is referred to as a \emph{``single-gate''} variation. The extension to variation of all gates (\emph{``total''} variation) on the full manifold \eqref{eq:Mfull} will be discussed in Sec.~\ref{sec:VarTotal}. Projecting the gradient $\hd=\partial_{\hU}E(\hU)=2\Tr_M(\hY\tU\hX)$ of the cost function \eqref{eq:costXY} onto the tangent space of the unitary group $\groupU(N)$ at $\hU$, we obtain the Riemannian gradient
\begin{equation}\label{eq:Riem_grad}
	\hg(\hU) = \Tr_M (\hY\tU\hX - \tU\hX\tU^\dag\hY\tU)
\end{equation}
as illustrated in Fig.~\ref{fig:CostAndGrad}b.
Given that we need to stay on the manifold $\groupU(N)$ during the optimization, $\hg$ is the relevant direction of change. As discussed in Refs.~\cite{Miao2021_08,Wiersema2023-107}, it can be efficiently measured on quantum computers. Here and in the following, $\Tr_N$ and $\Tr_M$ denote the partial traces over the first and second components of $\CC^{N}\otimes\CC^{M}$, respectively.

Let us summarize the derivation of Eq.~\eqref{eq:Riem_grad}: The $N\times N$ unitary gates are embedded in the $2N^2$ real Euclidean space $\mc{E}=\End(\CC^{N})\simeq\RR^{2N^2}$. The gradient in this embedding space is $\hd$. Using the (Euclidean) metric $(\hU,\hU'):=\operatorname{Re}\Tr(\hU^\dag\hU')$ for the embedding space $\mc{E}$, $\hd$ fulfills $\partial_\veps E(\hU+\veps\hV)|_{\veps=0}=(\hd,\hV)$ for all $\hV$. For Riemannian optimization algorithms, one needs to project $\hd$ onto the tangent space $\mc{T}_{\hU}$ of $\groupU(N)$ at $\hU$, and then construct retractions for line search, and vector transport to form linear combinations of gradient vectors from different points on the manifold \cite{Smith1994-3,Huang2015-25,Miao2021_08}. An element $\hV$ of the tangent space $\mc{T}_{\hU}$ needs to obey $(\hU+\veps \hV)^\dag(\hU+\veps \hV)=\id+\mc{O}(\veps^2)$ such that
\begin{equation}\label{eq:T}
	\mc{T}_{\hU}=\{\ui\hU\heta\,|\,\heta=\heta^\dag\in\End(\CC^{N})\}.
\end{equation}
The projection $\hg$ of $\hd$ onto this tangent space obeys $(\hV,\hg)=(\hV,\hd)$ for all $\hV\in\mc{T}_{\hU}$. This gives the Riemannian gradient $\hg = (\hd - \hU\hd^\dag\hU)/2$ which results in Eq.~\eqref{eq:Riem_grad}.

\section{Single-gate Haar-measure variances}\label{sec:VarLocal}
To evaluate averages and variances over $\groupU(N)$ (or, more generally the manifold $\M$), we employ Haar-measure integrals. The average of the Riemannian gradient \eqref{eq:Riem_grad} is zero,
\begin{equation}\label{eq:gAvg}
	\Avg \hg:= \int_{\groupU(N)} \ud U\, \hg(\hU) = \frac{1}{2}\int_{\groupU(N)} \ud U\, [\hg(\hU) + \hg(-\hU)] = 0,
\end{equation}
because $\hg$ is an odd function in $\hU$.
For the evaluation of $\Avg E$ and the variances, we only need the first and second-moment Haar-measure integrals over the unitary group. From the Weingarten formulas \cite{Weingarten1978-19,Collins2006-264} for the first and second moments, one obtains \cite{Barthel2023_03}
\begin{subequations}\label{eq:G}
\begin{align}\label{eq:G1}
	&\textstyle\Avg \hU^\dag\otimes\hU = \frac{1}{N}\Swap \quad\text{and}\\
	\label{eq:G2}
	&\textstyle\Avg \hU^\dag\otimes\hU\otimes\hU^\dag\otimes\hU
	 =\frac{1}{N^2-1}\left( 1 - \frac{1}{N}\Swap_{2,4}\right)\left(\Swap_{1,2}\Swap_{3,4}+\Swap_{1,4}\Swap_{2,3}\right)
\end{align}
\end{subequations}
with $\Swap=\sum_{i,j=1}^N|i,j\ket\bra j,i|$ and $\Swap_{k,\ell}$ swaps the $k^\text{th}$ and $\ell^\text{th}$ components of $\CC^N\otimes\CC^N\otimes\CC^N\otimes\CC^N$. Graphical representations of Eqs.~\eqref{eq:G1} and \eqref{eq:G2} are shown in Figs.~\ref{fig:Weingarten-E}a and \ref{fig:Weingarten-E}c.
Weingarten formulas can be proven using the Schur-Weyl duality and the double centralizer theorem \cite{Collins2006-264}. An illustrating proof for Eq.~\eqref{eq:G1} is given in Appx.~\ref{appx:Weingarten}.

Applying the Weingarten formulas \eqref{eq:G}, we find a simple linear relation between the single-gate cost-function variance
\begin{equation}\label{eq:EvarDef}
	\Var E = \Avg E^2 - \big(\Avg E\big)^2\quad\text{over}\quad \groupU(N)
\end{equation}
and the single-gate gradient variance $\Var \hg$. With the average gradient \eqref{eq:gAvg} being zero, we can quantify the gradient variance by
\begin{equation}\label{eq:gVarDef}
	\Var \hg := \Avg \frac{1}{N}\Tr (\hg^\dag \hg)\quad\text{over}\quad \groupU(N).
\end{equation}
This definition can be motivated as follows \cite{Barthel2023_03}: As any element of the tangent space \eqref{eq:T}, the gradient $\hg$ can be expanded in an orthonormal basis of involutory Hermitian operators $\{\hsigma_k\,|\,\hsigma_k=\hsigma^\dag_k=\hsigma^{-1}_k \}$ for $\End(\CC^N)$ with $\Tr(\hsigma_k\hsigma_{k'})=N\delta_{k,k'}$. This gives the gradient in the form $\hg=\ui \hU\sum_{k=1}^{N^2}\alpha_k \hsigma_k/N$, where each $\alpha_k$ corresponds to the derivative of one rotation angle. Hence, $\Tr (\hg^\dag \hg)/N=\sum_k\alpha_k^2/N^2$, i.e., Eq.~\eqref{eq:gVarDef} coincides with the average variance of the rotation-angle derivatives \Footnote{We ignore the heterogeneity of $\Avg({\alpha^2_k})$ for different $k=1,\dotsc,N^2$ of a single gate, because the gate Hilbert-space dimension $N$ is usually system-size independent.}.

An elementary proof given in Appx.~\ref{appx:theorem1} establishes a linear relation between the single-gate cost-function variance \eqref{eq:EvarDef} and gradient variance \eqref{eq:gVarDef}.
\begin{theorem}[Exact equivalence of single-gate cost-function and gradient variances]
  \label{theorem:singleGate}
  In the Riemannian formulation, the variance of the cost function \eqref{eq:cost} is exactly half the variance of the Riemannian gradient \eqref{eq:Riem_grad} when considering the dependence on one of the unitary gates of the quantum circuit ($\hU_{j\neq i}$ fixed), i.e.,
  \begin{equation}\label{eq:equivalenceSingle}
  	\var_{\hU_i} E(\{\hU_j\})=\frac{1}{2}\var_{\hU_i}\hg_i(\{\hU_j\})\qquad\forall\ \{\hU_{j\neq i}\}.
  \end{equation}
\end{theorem}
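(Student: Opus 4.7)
The plan is to derive both $\var_{\hU} E$ and $\var_{\hU}\hg$ explicitly from the first- and second-moment Weingarten formulas~\eqref{eq:G1}--\eqref{eq:G2}, obtain closed-form expressions in the same four scalar invariants of the auxiliary operator $\hZ$, and read off the factor-of-two relationship by direct comparison. Since~\eqref{eq:equivalenceSingle} is a pointwise statement in $\{\hU_{j\neq i}\}$, I would fix all other gates and work with the single-gate reduction~\eqref{eq:costXY}, abbreviating $\hU_i\equiv\hU$.

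First the cost-function side. Since $E(\hU)=\Tr(\hY\tU\hX\tU^\dag)$ is linear in the pair $\hU\otimes\hU^\dag$, the average $\Avg E$ uses only~\eqref{eq:G1} and evaluates to $(\Tr\hZ)/N$, while $E^2$ involves two such pairs, so $\Avg E^2$ is obtained from~\eqref{eq:G2}. Contracting the four swap-operator terms against two copies each of $\hX$ and $\hY$ (the diagrammatic manipulation of Fig.~\ref{fig:Weingarten-E}d) produces the four scalar invariants $(\Tr\hZ)^2$, $\Tr\hZ^2$, $\Tr((\Tr_1\hZ)^2)$, and $\Tr((\Tr_2\hZ)^2)$ with coefficients that yield~\eqref{eq:Evar} upon subtracting $(\Avg E)^2$.

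The gradient side requires one preparatory simplification. Using $\hg=(\hd-\hU\hd^\dag\hU)/2$ with $\hd=2\Tr_M(\hY\tU\hX)$, expanding $\hg^\dag\hg$, and applying $\hU^\dag\hU=\id$ and trace cyclicity, one finds
\begin{equation}
  \Tr(\hg^\dag\hg) = \tfrac{1}{2}\Tr(\hd^\dag\hd) - \tfrac{1}{2}\operatorname{Re}\Tr(\hd^\dag\hU\hd^\dag\hU),
\end{equation}
after combining the two cubic-in-$\hU$ cross terms as complex conjugates of each other. The first term contains a single $(\hU,\hU^\dag)$ pair, integrated via~\eqref{eq:G1}; the second contains two pairs, integrated via~\eqref{eq:G2}. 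Dividing by $N$ per~\eqref{eq:gVarDef} and simplifying gives a linear combination of the same four invariants of $\hZ$ as on the cost-function side, but with overall prefactor $2/(N^2-1)$ in place of $1/(N^2-1)$ --- this is~\eqref{eq:gVar}. Direct comparison with~\eqref{eq:Evar} then yields~\eqref{eq:equivalenceSingle}.

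The main obstacle is a bookkeeping one: tracking how the first-moment and second-moment contributions recombine and verifying that the cross terms $\Tr((\Tr_1\hZ)^2)$ and $\Tr((\Tr_2\hZ)^2)$ carry exactly the relative signs and coefficients required for the result to come out proportional to $\Var E$. The Riemannian structure $\hg=(\hd-\hU\hd^\dag\hU)/2$ is what makes this go through cleanly: the tangent-space projector built into $\hg$ ensures that $\Tr(\hg^\dag\hg)$ requires only the first two Haar moments, and it pairs the resulting contributions so that they collapse into the same four-invariant combination as $\Var E$, with the prefactor exactly doubled.
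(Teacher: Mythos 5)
Your proposal takes essentially the same route as the paper: both variances are evaluated explicitly via the first- and second-moment Weingarten formulas \eqref{eq:G1}--\eqref{eq:G2}, expressed in the same four invariants of $\hZ$, and compared directly to read off the factor of two between Eqs.~\eqref{eq:Evar} and \eqref{eq:gVar}. Your intermediate identity $\Tr(\hg^\dag\hg)=\tfrac{1}{2}\Tr(\hd^\dag\hd)-\tfrac{1}{2}\operatorname{Re}\Tr(\hd^\dag\hU\hd^\dag\hU)$ is correct and simply makes explicit the algebra that the paper handles diagrammatically in Fig.~\ref{fig:gradVar} and by reference to Ref.~\cite{Barthel2023_03}.
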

Of course, the proportionality of these \emph{conditional} single-gate variances translates directly into a proportionality of the averaged single-gate variances (the conditional variances \eqref{eq:equivalenceSingle} averaged over all $\hU_{j\neq i}$),
\begin{equation}\label{eq:singleGateVar}
	V_i:=\avg_{\{\hU_{j\neq i}\}}\var_{\hU_i} E(\{\hU_j\})
	\stackrel{\eqref{eq:equivalenceSingle}}{=}
	\frac{1}{2}\avg_{\{\hU_{j\neq i}\}}\var_{\hU_i}\hg_i(\{\hU_j\}).
\end{equation}

\section{Total Haar-measure variances}\label{sec:VarTotal}
In Sec.~\ref{sec:VarLocal}, we only considered the dependence of the cost function \eqref{eq:cost} on one of the unitary gates  ($\hU$) in the circuit as well as the single-gate gradient \eqref{eq:Riem_grad}. In this section, we consider the dependence on all variable unitary gates $(\hU_1,\hU_2,\dotsc,\hU_K)\in\M$ with $\hU_i\in\groupU(N_i)$ and the corresponding total variances like 
\begin{equation}\label{eq:EvarFullDef}
	\var_{\{\hU_j\}}E\equiv\avg_{\{\hU_j\}}(E^2)-(\avg_{\{\hU_j\}}E)^2
\end{equation}
for the cost function.

The full Riemannian gradient of the cost function \eqref{eq:cost} with respect to all variable gates is simply the direct sum of the individual gradients, i.e.,
\begin{equation}
 	\hg_\full=\hg_1\oplus\hg_2\oplus\dotsb\oplus\hg_k\quad\text{with}\quad
 	\hg_i= \Tr_{M_i} (\hY_i\tU_i\hX_i - \tU_i\hX_i\tU_i^\dag\hY_i\tU_i).
\end{equation}
Here $\tU_i := \hU_i\otimes \id_{M_i}$ with $\hX_i$ and $\hY_i$ depending on the remaining gates of the circuit, and $\dm_0$ as well as $\hO$ as in Eq.~\eqref{eq:costXY}.
In extension of Eq.~\eqref{eq:gVarDef}, we define the total variance of $\hg_\full$ as 
\begin{equation}\label{eq:gVarFullDef}
	\var_{\{\hU_j\}} \hg_\full
	:= \frac{1}{K}\sum_{i=1}^K \avg_{\{\hU_j\}}\frac{1}{N_i}\Tr (\hg_i^\dag \hg_i^\pdag)
	\stackrel{\eqref{eq:gVarDef}}{=}
  	 \frac{1}{K}\sum_{i=1}^K \avg_{\{\hU_{j\neq i}\}} \var_{\hU_i} \hg_i.
\end{equation}

The following central result as proven in Appx.~\ref{appx:theorem2} is based on an analysis of covariances, the law of total variance, and Theorem~\ref{theorem:singleGate}.
\begin{theorem}[Equivalence of circuit cost-function concentration and gradient vanishing]
  \label{theorem:global}
  When averaging over the variable unitaries $\{\hU_{i}\}$ of the quantum circuit $\hUC$ according to the Haar measure, the total variance of the cost function \eqref{eq:cost} and the total variance of the full Riemannian gradient \eqref{eq:gVarFullDef} are both bounded from below by single-gate variances $V_i$ [Eq.~\eqref{eq:singleGateVar}], and they are bounded from above by or proportional to the sum $\sum_i V_i$,
  \begin{subequations}\label{eq:equivalenceTotal}
  \begin{alignat}{7}
  	\label{eq:equivalenceTotal-a}
  	V_j &\leq& \var_{\hU_1,\dotsc,\hU_K}&& E(\hU_1,\dotsc,\hU_K)
  	&\leq \sum_{i=1}^K V_i\quad&&\forall\ j\quad\text{and}\\
  	V_j &\stackrel{\eqref{eq:gVarFullDef}}{\leq}&
  	\frac{K}{2}\var_{\hU_1,\dotsc,\hU_K}&& \hg_{\mathrm{full}}(\hU_1,\dotsc,\hU_K)
  	&\stackrel{\eqref{eq:gVarFullDef}}{=}
  	\sum_{i=1}^K V_i\quad&&\forall\ j.
  	\label{eq:equivalenceTotal-b}
  \end{alignat}
  \end{subequations}
  In particular, if all single-gate variances $V_i$ of polynomial-depth circuits ($K=\operatorname{poly} n$) decay exponentially in the system size (number of qudits) $n$, then both total variances \eqref{eq:EvarFullDef} and \eqref{eq:gVarFullDef} decay exponentially in $n$. Conversely, if one of the total variances decays exponentially in $n$, then all single-gate variances also decay exponentially. So, the barren-plateau problem and exponential cost-function concentration always appear simultaneously.
\end{theorem}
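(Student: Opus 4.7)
The plan is to prove the four inequalities in \eqref{eq:equivalenceTotal-a} and \eqref{eq:equivalenceTotal-b} one at a time, using nothing beyond the single-gate identity of Theorem~\ref{theorem:singleGate} together with standard product-measure variance calculus. The key enabling fact is that the Haar measure on the product manifold $\M=\groupU(N_1)\times\dots\times\groupU(N_K)$ is the product of the Haar measures on the factors, so the gates $\hU_i$ are independent random variables and both the law of total variance and the Efron--Stein inequality apply directly.

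I will first handle Eq.~\eqref{eq:equivalenceTotal-b}. The equality $\tfrac{K}{2}\var\hg_\full=\sum_i V_i$ is pure bookkeeping: unfolding the definition \eqref{eq:gVarFullDef} and invoking Theorem~\ref{theorem:singleGate} inside the outer average over $\{\hU_{j\neq i}\}$ replaces each $\avg_{\{\hU_{j\neq i}\}}\var_{\hU_i}\hg_i$ with $2V_i$. The lower bound $V_j\leq\tfrac{K}{2}\var\hg_\full$ is then immediate from $V_i\geq 0$.

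For Eq.~\eqref{eq:equivalenceTotal-a} I will proceed in two steps. For the upper bound $\var E\leq\sum_i V_i$ I will invoke the Efron--Stein inequality. A quick self-contained derivation uses the orthogonal ANOVA decomposition $E-\avg E=\sum_{\emptyset\neq S\subseteq\{1,\dots,K\}}E_S$, where $E_S$ depends only on $\{\hU_i\}_{i\in S}$ and vanishes under the Haar average over any $\hU_i$ with $i\in S$; orthogonality gives $\var E=\sum_{S\neq\emptyset}\avg E_S^2$ and $V_i=\sum_{S\ni i}\avg E_S^2$, hence $\sum_i V_i=\sum_{S\neq\emptyset}|S|\,\avg E_S^2\geq\var E$. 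For the lower bound $V_j\leq\var E$ I will apply the law of total variance to the split $\hU_j$ versus $\{\hU_{i\neq j}\}$, giving $\var E=V_j+\var_{\{\hU_{i\neq j}\}}\bigl[\avg_{\hU_j}E\bigr]\geq V_j$. The decay corollary then follows mechanically: if all $V_i$ decay exponentially in $n$ and $K=\operatorname{poly}n$, the upper bounds yield $\var E,\,\tfrac{K}{2}\var\hg_\full\leq K\max_i V_i$, which is still exponentially small; conversely, exponential decay of either total variance transfers to every $V_j$ through the lower bounds in \eqref{eq:equivalenceTotal}.

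I do not anticipate any serious obstacle. The only nontrivial analytical input — the exact single-gate proportionality of cost-function and gradient variances — was already established in Theorem~\ref{theorem:singleGate}, and the remaining ingredients (the Efron--Stein inequality and the law of total variance) are classical facts that apply verbatim once one notices that the Haar measure on $\M$ factorizes. What remains after that observation is elementary manipulation of conditional expectations plus the bookkeeping for the exponential-decay corollary.
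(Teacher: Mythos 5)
Your proposal is correct, and it reaches every claimed inequality, but for the key upper bound $\var_{\hU_1,\dotsc,\hU_K}E\leq\sum_i V_i$ you take a genuinely different route from the paper. The paper exploits the tensor-network structure of the cost function: it writes $E(\hU_1,\hU_2)=\sum_a f_a(\hU_1)g_a(\hU_2)$, derives the exact identity $\var_{1,2}E=\avg_2\var_1 E+\avg_1\var_2 E-\sum_{a,b}\cov(f_a,f_b)\cov(g_a,g_b)$, drops the last term using positive semidefiniteness of the covariance matrices, and then iterates this two-block bound over successive bipartitions of the $K$-gate network. You instead invoke the Efron--Stein inequality via the ANOVA/Hoeffding decomposition $E-\avg E=\sum_{\emptyset\neq S}E_S$, which applies to \emph{any} square-integrable function of the independent Haar-distributed gates and never uses the multilinear form $E=\sum_a\prod_i f^{(i)}_a(\hU_i)$; your identification $V_i=\sum_{S\ni i}\avg E_S^2$ is correct, and the bound $\sum_{S\neq\emptyset}|S|\avg E_S^2\geq\var E$ follows. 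Your argument is more general and avoids the iteration step; the paper's argument buys an explicit expression for the slack in the inequality (the trace of a product of covariance matrices), which makes the mechanism of the bound more transparent in the circuit setting. The remaining pieces — the lower bound via the law of total variance with the split $\hU_j$ versus $\{\hU_{i\neq j}\}$, the bookkeeping for Eq.~\eqref{eq:equivalenceTotal-b} from the definition \eqref{eq:gVarFullDef} and Theorem~\ref{theorem:singleGate}, and the exponential-decay corollary — coincide with the paper's treatment.
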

Note that the conclusions below Eq.~\eqref{eq:equivalenceTotal-b} remain valid if we choose a different weighting in the definition of the full gradient variance \eqref{eq:gVarFullDef}. For example, we could also define it as $\frac{1}{\sum_{i=1}^K N_i^2}\sum_{i=1}^K N_i \avg_{\{\hU_j\}}\Tr (\hg_i^\dag \hg_i^\pdag)$, corresponding to an equal weighting of all rotation-angle derivatives in the parametrization discussed below Eq.~\eqref{eq:gVarDef}.

\section{Numerical verification}
\begin{figure}[t]
	\includegraphics[width=0.55\textwidth]{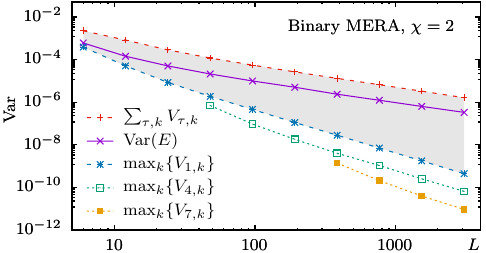}
	\caption{\label{fig:MERAboundTest} Numerical verification of Theorem~\ref{theorem:global} for binary one-dimensional MERA \cite{Vidal-2005-12,Vidal2006} with bond dimension $\chi=2$ and the cost function $E$ given by the energy expectation value \eqref{eq:MERA-E}. In accordance with Eq.~\eqref{eq:equivalenceTotal-a}, the cost-function variance is bounded from below by single-gate variances $V_{\tau,k}$ and from above by $\sum_{\tau,k} V_{\tau,k}$.}
\end{figure}
For an illustration of the general bounds on the cost-function variance in Theorem~\ref{theorem:global}, consider a one-dimensional binary MERA $|\Psi\ket$ for spin-1/2 chains of length $L=3\cdot 2^T$, where $T$ is the number of layers in the MERA \cite{Vidal-2005-12,Vidal2006}. The cost function is given by the energy (density) expectation value
\begin{equation}\label{eq:MERA-E}
	E = \bra\Psi|\hH|\Psi\ket,\quad\text{where the Hamiltonian}\quad
	\hH=\frac{1}{\sqrt{24}\,L}\sum_{i=1}^L \sum_{a=x,y,z}\hsigma_i^a \hsigma_{i+1}^a \hsigma_{i+2}^a
\end{equation}
is a sum of three-site interaction terms with Pauli operators $\hsigma^x_i,\hsigma^y_i,\hsigma^z_i$ acting on site $i$. In the evaluation of the variances, Haar-averages are executed by numerical sampling, and we denote the single-gate variances \eqref{eq:singleGateVar} by
$V_{\tau,k}$, where the position $(\tau,k)$ indicates the $k^\text{th}$ tensor in layer $\tau$.

As shown in Fig.~\ref{fig:MERAboundTest} for MERA with bond dimension $\chi=2$, the total cost-function variance $\var(E)$ [Eq.~\eqref{eq:EvarFullDef}] is, in accordance with Eq.~\eqref{eq:equivalenceTotal-a}, bounded from above by the sum of all single-gate variances $\sum_{\tau,k} V_{\tau,k}$, and the single-gate variances $V_{\tau,k}$ provide lower bounds.
Within a given layer $\tau$, the single-gate variances $V_{\tau,k}$ are approximately constant and, as shown in Refs~\cite{Miao2024-109,Barthel2023_03}, they decrease exponentially as
\begin{equation}\label{eq:MERAdecay}
	V_{\tau,k}\propto(324/625)^\tau.
\end{equation}
Hence, the best lower bound in Fig.~\ref{fig:MERAboundTest} is given by the maximum of $V_{1,k}$.

Note that the energy optimization problems for MERA, tree tensor networks states \cite{Fannes1992-66,Otsuka1996-53,Shi2006-74}, and matrix product states \cite{Fannes1992-144,Schollwoeck2011-326} for Hamiltonian with finite-range interactions are actually free of barren plateaus \cite{Miao2024-109,Barthel2023_03}. Nevertheless, the general variance relations from Theorem~\ref{theorem:global} apply.

\section{Discussion}
\begin{figure}[t]
	\includegraphics[width=0.95\textwidth]{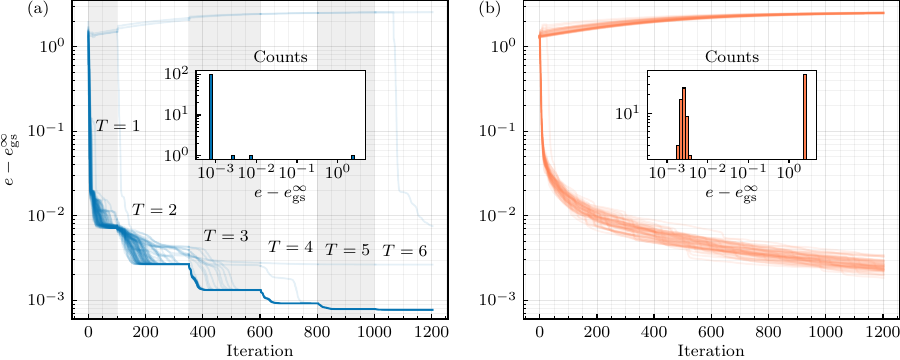}
	\caption{\label{fig:MERAbuildUp} Optimization of randomly initialized binary-MERA quantum circuits with bond dimension $\chi=2$ for the minimization of the energy expectation value of the spin-1/2 transverse-field Ising chain \eqref{eq:tIsing} at $h=1.03$. The plots show the optimization history for the deviation of the energy density $e=\bra\Psi|\hH|\Psi\ket/L$ from the exact groundstate energy density $e_\text{gs}^\infty$. Insets show histograms for the accuracy of 100 randomly initialized MERA with 6 layers after 1200 iterations. (a) The single-gate variances \eqref{eq:singleGateVar} turn out to decay exponentially in the layer index $\tau$ \cite{Miao2024-109,Barthel2023_03}. As indicated by the gray regions, we hence, first optimize layer $\tau=1$ only, then, after 100 iterations, all tensors of layers $\tau\leq 2$, then all tensors of layers $\tau\leq 3$ etc. (b) In contrast, using the traditional apporach of simultaneously optimizing all layers from the very beginning leads to considerably slower convergence and more circuits stuck in local minima.}
\end{figure}
Given the equivalence of cost-function concentration and gradient vanishing on both the single-gate as well as full-circuit levels (Theorems~\ref{theorem:singleGate} and \ref{theorem:global}, respectively), we can assess gradient vanishing and, especially, barren plateaus more easily through the scalar cost function. In fact, this route has already been pursued in recent analytic works on the trainability of variational quantum algorithms~\cite{Thanasilp2022_08,Rudolph2023_05,Ragone2023_09,Diaz2023_10,Cerezo2023_12,Xiong2023_12}.

Inspired by the work of Arrasmith \emph{et al.}\ \cite{Arrasmith2022-7} on the parametrized circuits and Euclidean gradients, we studied the question in the Riemannian formulation which makes the proofs rather simple and yields additional insights:
(a) The single-gate variances of gradients and of the cost function turn out to be strictly proportional. (b) In the Euclidean formulation, Arrasmith \emph{et al.} obtained results for the variance of cost-function \emph{differences} like $\var_{\vec{\theta}}\left(E(\vec{\theta}')-E(\vec{\theta})\right)\leq \mc{O}\left(K^2(n) V(n)\right)$, where $\vec{\theta}'$ is a random reference point, $K(n)$ is the number of variable gates as a function of the system size $n$, and $V(n)$ is a common upper bound for all single-gate (gradient) variances $V_i$. This difference construction turned out to be unnecessary in the Riemannian formulation and we could access $\var_{\hU_1,\dotsc,\hU_K}E$ directly. (c) Furthermore, we obtained the tighter bound $\var_{\hU_1,\dotsc,\hU_K}E\leq K(n)V(n)$. This result aligns with our experience in numerical simulations and could probably be further tightened.

While quantifying the total cost-function variance is easier than studying single-gate gradient variances or, equivalently, single-gate cost-function variances, the latter provide more detailed trainability information. For example, the single-gate variances in MERA tensor networks vary strongly from layer to layer. Gates in lower layers have a more substantial impact on the cost function landscape than those in upper layers. This can be taken into account to improve optimization schemes \cite{Miao2023_03}.

As a specific example, consider the optimization of the quantum circuit that defines the MERA $|\Psi\ket$ to minimize the energy expectation value $\bra\Psi|\hH|\Psi\ket$ for the spin-1/2 transverse-field Ising chain
\begin{equation}\label{eq:tIsing}
	\hH = - \sum_{i=1}^L (\hsigma_i^x\hsigma_{i+1}^x + h\hsigma_i^z).
\end{equation}
The Ising chain has a critical point at $|h|=1$, where the groundstate is particularly strongly entangled, featuring the entanglement log-area law. It follows from the analysis in Refs.~\cite{Miao2024-109,Barthel2023_03} that the total cost-function variance is (up to finite-size corrections) independent of the system size $L$ and decays algebraically with increasing MERA bond dimension $\chi$. This means that there is no barren-plateau problem and the optimization is in general possible. The single-gate variances provide more detailed information: MERA are hierarchical tensor networks with a layer structure. It turns out that the single-gate variances decay exponentially in the layer index $\tau=1,\dotsc,T$, where $T$ is the number of MERA layers \cite{Miao2024-109,Barthel2023_03}. See Eq.~\eqref{eq:MERAdecay} for binary MERA with $\chi=2$.
As demonstrated in Fig.~\ref{fig:MERAbuildUp}, this suggests a more efficient optimization scheme \cite{Miao2023_03}, where we start by setting the gates of layers $\tau\geq 2$ to $\hU_{\tau,k}=\id$ and, initially, only optimize those of layer $\tau= 1$. After a suitable number of iterations, we proceed by optimizing the gates of layers $\tau\leq 2$, then those of layers $\tau\leq 3$ and continue in this way, building up the MERA circuit layer by layer. The numerical results close to the critical point of the model confirm that this scheme is more efficient than the traditional approach of, right away, optimizing all layers simultaneously. On average, one achieves a higher energy accuracy and less circuits remain stuck in local minima.

While single-gate variances provide considerably more information than the total cost-function variance alone, they still give limited insight about trainability and convergence properties. They can show that gradient amplitudes are \emph{on average} above or below certain thresholds, but they are certainly not a measure for the complexity of the cost function landscape, the importance of local minima, or specifics of optimization trajectories.

\begin{acknowledgments}
We gratefully acknowledge helpful discussions with Baoyou Qu and support by the National Science Foundation (NSF) Quantum Leap Challenge Institute for Robust Quantum Simulation (Award No.\ OMA-2120757).
\end{acknowledgments}

\appendix

\section{Proof of the first Weingarten formula \eqref{eq:G1}}\label{appx:Weingarten}
\begin{proof}
Consider the Haar-measure integral
\begin{equation}\label{eq:B}
	\hB := \int_{\groupU(N)} \ud U\, \hU^\dag \hA \hU
\end{equation}
over the unitary group $\groupU(N)$, where $\hA,\hB\in\End(\CC^N)$ are linear operators. Due to the invariance of the Haar measure, we have
\begin{equation}
	\hV^\dag\hB\hV
	= \int_{\groupU(N)} \ud U\, (\hU\hV)^\dag \hA (\hU\hV) = \hB
	\quad\text{for all}\quad \hV\in\groupU(N).
\end{equation}
According to Schur's lemma, a linear operator $\hB$ that commutes with all elements $\hV$ of the unitary group is a scalar multiple of the identity, i.e., $\hB=\lambda \id_N$. Taking the trace, we have
\begin{equation}
	N\lambda = \Tr\hB
	\stackrel{\eqref{eq:B}}{=}
	\int_{\groupU(N)} \ud U\, \Tr(\hU^\dag \hA \hU)
	=\int_{\groupU(N)} \ud U\, \Tr(\hA) = \Tr(\hA).
\end{equation}
Now, choosing $\hA=|i\ket\bra j|$ for an orthonormal basis $\bra i|j\ket=\delta_{i,j}$, we can conclude that
\begin{equation}
	\frac{\Tr(\hA)}{N}\,\id_N = \int_{\groupU(N)} \ud U\, \hU^\dag \hA \hU
	\quad \Rightarrow\quad
	\int_{\groupU(N)}\ud U\, U^*_{i,k} U_{j,\ell}  = \frac{1}{N} \delta_{i,j}\delta_{k,\ell},
\end{equation}
which is the first Weingarten formula and equivalent to Eq.~\eqref{eq:G1}. See also Fig.~\ref{fig:Weingarten-E}a.
\end{proof}

\section{Proof of Theorem~\ref{theorem:singleGate}}\label{appx:theorem1}
\begin{figure*}[t]
	\includegraphics[width=0.97\textwidth]{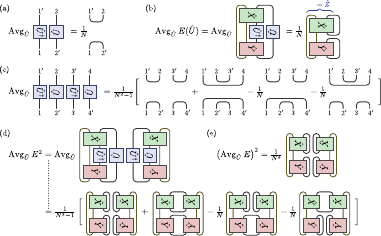}
	\caption{\label{fig:Weingarten-E}
	(a) Diagrammatic representation for the first-moment Haar-measure integral \eqref{eq:G1} over the unitary group $\groupU(N)$.
	(b) The single-gate Haar average \eqref{eq:Eavg} of the cost function \eqref{eq:costXY}.
	(c) Second-moment Haar-measure integral \eqref{eq:G2} over the unitary group $\groupU(N)$.
	(d) Average \eqref{eq:E2avg} of the squared cost function.
	(e) The single-gate Haar variance \eqref{eq:Evar} is obtained by subtracting the squared cost-function average.}
\end{figure*}
\begin{proof}
Applying the first Weingarten formula \eqref{eq:G1}, illustrated in Fig.~\ref{fig:Weingarten-E}a, the Haar-measure average of the cost-function \eqref{eq:costXY} evaluates to
\begin{equation}\label{eq:Eavg}
	\Avg E \stackrel{\eqref{eq:G1}}{=} 
	\frac{1}{N}\Tr\left(\Tr_N(\hX)\Tr_N(\hY)\right) = \frac{1}{N}\Tr \hZ
\end{equation}
as shown in Fig.~\ref{fig:Weingarten-E}b, where $\hZ\in\End(\CC^N\otimes\CC^N)$ with
\begin{equation}
	\bra i_1,i_2|\hZ|j_1,j_2\ket=\sum_{m,m'=1}^M \bra i_1,m|\hX|j_1,m'\ket\bra i_2,m'|\hY|j_2,m\ket.
\end{equation}
Applying the second Weingarten formula \eqref{eq:G2}, illustrated in Fig.~\ref{fig:Weingarten-E}c, the second moment of the cost function evaluates to
\begin{equation}\label{eq:E2avg}
	\Avg E^2 \stackrel{\eqref{eq:G2}}{=}
	 \frac{1}{N^2-1}\left((\Tr\hZ)^2-\frac{\Tr\big((\Tr_1\hZ)^2+(\Tr_2\hZ)^2\big)}{N}+\Tr\hZ^2\right),
\end{equation}
where $\Tr_1\hZ$ and $\Tr_2\hZ$ denote the partial traces of $\hZ$ over the first and second components of $\CC^N\otimes\CC^N$, respectively.
See Fig.~\ref{fig:Weingarten-E}d. Hence, the single-gate cost-function variance $\Var E = \Avg E^2 - (\Avg E)^2$ over $\groupU(N)$ is
\begin{equation}\label{eq:Evar}
	\Var E \stackrel{\eqref{eq:Eavg},\eqref{eq:E2avg}}{=}
	\frac{1}{N^2-1}\left(\frac{(\Tr\hZ)^2}{N^2}-\frac{\Tr\big((\Tr_1\hZ)^2+(\Tr_2\hZ)^2\big)}{N}+\Tr\hZ^2\right).
\end{equation}

As discussed in Ref.~\cite{Barthel2023_03} and shown diagrammatically in Fig.~\ref{fig:gradVar}, the single-gate gradient variance \eqref{eq:gVarDef} valuates to
\begin{equation}\label{eq:gVar}
	\Var\hg
	\stackrel{\eqref{eq:Riem_grad},\eqref{eq:G}}{=}
	\frac{2}{N^2-1}\left(\frac{(\Tr\hZ)^2}{N^2}-\frac{\Tr\big((\Tr_1\hZ)^2+(\Tr_2\hZ)^2\big)}{N}+\Tr\hZ^2\right)
\end{equation}
So, the cost-function variance \eqref{eq:Evar} is \emph{exactly half} of the Riemannian gradient variance \eqref{eq:gVar}. 
\end{proof}
\begin{figure*}[t]
	\includegraphics[width=0.97\textwidth]{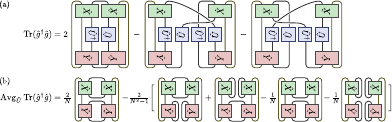}
	\caption{\label{fig:gradVar}
	(a) Diagrammatic representation of the central quantity $\Tr(\hg^\dag\hg)$ in the definition \eqref{eq:gVarDef} of the Riemannian gradient variance.
	(b) The single-gate variance \eqref{eq:gVar} of the Riemannian gradient, obtained by applying the second-moment Weingarten formula \eqref{eq:G2} as illustrated in Fig.~\ref{fig:Weingarten-E}c.}
\end{figure*}

\section{Proof of Theorem~\ref{theorem:global}}\label{appx:theorem2}
\begin{proof}
\textbf{(a)} Let us first consider a circuit with only two variable unitaries $\hU_1$ and $\hU_2$. In this case, the cost function \eqref{eq:cost} can be written in the form 
\begin{equation}\label{eq:E12-split}
	E=E(\hU_1,\hU_2)=\sum_a f_a(\hU_1)g_a(\hU_2),
\end{equation}
where $f_a$ and $g_a$ are continuous functions which only depend on $\hU_1$ and $\hU_2$, respectively: Analogously to Fig.~\ref{fig:CostAndGrad}c, we can always bipartition the the tensor network for $E(\hU_1,\hU_2)$ into two parts $f$ and $g$ with $f$ containing $\hU_1^\pdag,\hU_1^\dag$ and $g$ containing $\hU_2^\pdag,\hU_2^\dag$. The contraction of the two parts (operator products and trace to obtain the scalar $E$) then corresponds to the sum over $a$ in Eq.~\eqref{eq:E12-split}.

The single-gate cost variance for $\hU_1$ at fixed $\hU_2$ then is ($\avg_i\equiv \avg_{\hU_i}$ and $\var_i\equiv \var_{\hU_i}$)
\begin{align}\nonumber
	\var_1 E(\hU_1,\hU_2)
	&=\avg_1( E^2) - (\avg_1 E)^2\\
	&=\sum_{a,b}\underbrace{\left[\avg_1(f_a f_b)-\avg_1(f_a)\avg_1(f_b)\right]}_{\equiv \cov(f_a,f_b)}g_a g_b
\end{align}
and, similarly $\var_2 E = \sum_{a,b} f_a f_b \cov(g_a,g_b)$.

Using that $\avg_{1,2}(f_a f_b g_a g_b)=\avg_1(f_a f_b)\avg_2(g_a g_b)$ due to the independence of $\hU_1$ and $\hU_2$ in the Haar-measure average, the global cost-function variance is ($\avg_{1,2}\equiv \avg_{\hU_1,\hU_2}\equiv\avg_{\hU_1}\avg_{\hU_2}$)
\begin{align}\nonumber
	\var_{1,2} E
	&=\avg_{1,2}( E^2) - (\avg_{1,2} E)^2\\\nonumber
	&=\sum_{a,b}\left[\avg_1(f_a f_b)\avg_2(g_a g_b)-\avg_1(f_a)\avg_1(f_b)\avg_2(g_a)\avg_2(g_b)\right]\\\nonumber
	&=\avg_2\var_1 E + \avg_1\var_2 E - \sum_{a,b} \cov(f_a,f_b) \cov(g_a,g_b)\\
	&\leq \avg_2\var_1 E + \avg_1\var_2 E
	\label{eq:E12-var}
\end{align}
This is the right inequality in Eq.~\eqref{eq:equivalenceTotal-a} for the case of two variable unitaries. In the last step, we have used that the covariance matrices
\begin{equation}
	\cov(f_a,f_b) = \avg(f_a f_b)-\avg(f_a)\avg(f_b) = \avg\left([f_a-\avg f_a][f_b-\avg f_b]\right)
\end{equation}
and $\cov(g_a,g_b)$ are positive semidefinite such that the trace $\sum_{a,b} \cov(f_a,f_b) \cov(g_a,g_b)$ of their product is non-negative.

\textbf{(b)} The generalization to a circuit with $K$ variable unitaries follows by iterating Eq.~\eqref{eq:E12-var}. Decomposing the tensor network as before into $K$ parts, each containing only one of the variable unitaries and its adjoint, we can write the cost function in the form
\begin{equation}\label{eq:EK-split}
	E=E(\hU_1,\hU_2,\dotsc,\hU_K)=\sum_a f^{(1)}_a(\hU_1)f^{(2)}_a(\hU_2)\dotsb f^{(K)}_a(\hU_K).
\end{equation}
Now, iterating Eq.~\eqref{eq:E12-var}, we find
\begin{align}\nonumber
	\var_{1,2,\dotsc,K} E
	&\leq \avg_{2,\dotsc,K}\var_1 E + \avg_1\var_{2,\dotsc,K} E\nonumber\\
	&\leq \avg_{2,\dotsc,K}\var_1 E + \avg_1\left(
	  \avg_{3,\dotsc,K}\var_2 E + \avg_2\var_{3,\dotsc,K} E\right)\nonumber\\\nonumber
	&\leq \dots \leq \sum_i \avg_{\{j\neq i\}}\var_i E \equiv \sum_i V_i,
\end{align}
where $\avg_{i_1,\dotsc,i_n} h\equiv\avg_{i_1}\dots\avg_{i_n}h$ and
$\var_{i_1,\dots,i_n} h\equiv\avg_{i_1,\dotsc,i_n}h^2-(\avg_{i_1,\dotsc,i_n}h)^2$.

\textbf{(c)} The right inequality in Eq.~\eqref{eq:equivalenceTotal-a} follows by applying the law of total variance,
\begin{equation}\label{eq:EK-totalVar}
	\var_{1,2,\dotsc,K} E 
	= \avg_{\{j\neq i\}} \var_i E + \var_{\{j\neq i\}} \avg_i E
	\geq \avg_{\{j\neq i\}} \var_i E
	\stackrel{\eqref{eq:singleGateVar}}{\equiv} V_i,
\end{equation}
which holds for all $i$. Recall that, given two random variables $E$ and $U_i$ on the same probably space ($\M$), the law of total variance states that $\var(E)=\avg\big(\!\var(E|U_i)\big)+\var\big(\!\avg(E|U_i)\big)$. This corresponds to the first step in Eq.~\eqref{eq:EK-totalVar}. In the second, step, we have used the nonegativity of the variance.
\end{proof}

\end{document}